\newtheoremstyle{note}
{3pt}
{1pt}
{}
{\parindent}
{\itshape}
{:}
{.5em}
{\thmname{#1}\thmnumber{ #2}\thmnote{\thmnote{ (#3)}}}
\theoremstyle{note}
\newtheorem{theorem}{Theorem}
\newtheorem{lemma}{Lemma}
\newtheorem{remark}{Remark}
\newtheorem{definition}{Definition}
\theoremstyle{definition}
\newtheoremstyle{dotless}{}{}{\itshape}{}{\bfseries}{}{ }{}
\theoremstyle{dotless}
\newcommand {\aplt} {\ {\raise-.5ex\hbox{$\buildrel<\over{\mbox{\scriptsize $\sim$}}$}}\ }
\begin{document}
%
\title{Polar Lattices for Strong Secrecy Over the
Mod-$\Lambda$ Gaussian Wiretap Channel}

\author{\IEEEauthorblockN{Yanfei Yan, Ling Liu and Cong Ling}
\IEEEauthorblockA{Department of Electrical and Electronic Engineering\\
Imperial College London\\
London, UK\\
Email: \{y.yan10, l.liu12\}@imperial.ac.uk, cling@ieee.org}
}
\maketitle

\begin{abstract}
Polar lattices, which are constructed from polar codes, are provably good for the additive white Gaussian noise (AWGN) channel. In this work,
we propose a new polar lattice construction that achieves the secrecy capacity under the strong secrecy criterion over the mod-$\Lambda$ Gaussian wiretap channel. This construction leads to an AWGN-good lattice and a secrecy-good lattice simultaneously. The design methodology is mainly based on the equivalence in terms of polarization between the $\Lambda/\Lambda'$ channel in lattice coding and the equivalent channel derived from the chain rule of mutual information in multilevel coding.
\end{abstract}


%
\IEEEpeerreviewmaketitle

\section{Introduction}
Wyner \cite{wyner1} introduced the wiretap channel and showed that both
reliability to transmission errors and a prescribed degree of data
confidentiality could be attained by channel coding without any key
bits if the channel between the
sender and the eavesdropper (wiretap channel $C_{W}$) is a degraded version of the channel
between the sender and the legitimate receiver (main channel $C_{V}$). The goal is to design a coding scheme that makes it possible to communicate both reliably and securely, as the block length of transmitted codeword $N$ tends to infinity. Reliability is measured by the decoding error probability of the legitimate user, namely $\lim_{N\rightarrow\infty} \text{Pr}\{\hat{M}\neq M\}=0$, where $M$ is the confidential message and $\hat{M}$ is its estimation. Secrecy is measured by the mutual information between $M$ and the signal received by the eavesdropper $Z^N$. Currently the widely accepted strong secrecy condition was proposed by Csisz\'{a}r \cite{csis1}: $\lim_{N\rightarrow\infty}I(M;Z^{N})=0$. In simple terms, the secrecy capacity is the maximum achievable rate of any coding scheme that can satisfy both the reliability and strong secrecy conditions.

Polar codes \cite{polarcodes} have been shown their great potential of solving this wiretap coding problem. The polar coding scheme proposed in \cite{polarsecrecy} is proved to achieve the strong secrecy capacity with explicit construction when $C_V$ and $C_W$ are both binary-input memoryless channels, although it is not able to guarantee the reliability condition. A subsequently modified scheme \cite{NewPolarSchemeWiretap} fixes this issue and finally satisfies the two conditions. However, for continuous channels such as the Gaussian wiretap channel, the problem of achieving strong secrecy with a practical code is still open.

There has been some progress in wiretap lattice coding for the Gaussian wiretap channel. On the theoretical aspect, the achievable rate for lattice coding achieving weak secrecy over the Gaussian wiretap channel has been derived \cite{cong}. Furthermore, the existence of lattice codes approaching the secrecy capacity under the strong secrecy criterion (semantic security) was demonstrated
in \cite{cong2}. On the practical aspect, wiretap lattice codes were proposed in \cite{belf3} to maximize the eavesdropper's decoding error probability. Since the analysis of the mod-$\Lambda$ Gaussian channel is a key element to the analysis of Gaussian channels and it provides considerable insight into the construction for the Gaussian wiretap channel, we limit ourselves to the mod-$\Lambda$ Gaussian wiretap channel shown in Fig. \ref{fig:wiretapchannel}. The lattice coset coding scheme for the mod-$\Lambda_s$ Gaussian wiretap channel was introduced in \cite{cong2}. $M$ is encoded into an $N$-dimensional transmitted codeword $X^N$.
The outputs $Y^{N}$ and $Z^{N}$ at Bob and Eve's end
respectively are given by
\begin{eqnarray}
\left\{\begin{aligned}
&Y^{N}=[X^{N}+W_{b}^{N}]\text{ mod } \Lambda_s, \notag\\
&Z^{N}=[X^{N}+W_{e}^{N}] \text{ mod } \Lambda_s, \notag\
\end{aligned}\right.
\notag\
\end{eqnarray}
where $W_{b}^{N}$ and $W_{e}^{N}$ are $N$-dimensional Gaussian vectors with zero mean and variance
$\sigma_{b}^{2}$, $\sigma_{e}^{2}$ respectively. The channel input $X^N$ satisfies the power constraint
\begin{eqnarray}
\frac{1}{N}E\Big[\parallel X^N\parallel^2\Big]\leq P.
\notag\
\end{eqnarray}
To satisfy this power constraint, we choose a shaping lattice $\Lambda_s$ whose second moment per dimension is $P$. Under the continuous approximation for large constellations, the transmission power will be equal $P$.

\begin{figure}[t]
    \centering
    \includegraphics[width=8cm]{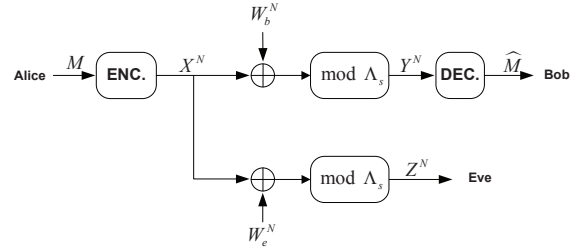}
    \caption{The mod-$\Lambda_s$ Gaussian wiretap channel.}
    \label{fig:wiretapchannel}
\end{figure}

Assume that $\Lambda_s$, $\Lambda_e$ and $\Lambda_b$ are quantization-good \cite{BK:Zamir}, secrecy-good and AWGN-good respectively (the latter two will be defined in Sect. II). Let $\Lambda_s\subset\Lambda_e\subset\Lambda_b$ be a nested chain of $N$-dimensional lattices in $\mathbb{R}^N$ such that $\frac{1}{N}\log|\Lambda_b/\Lambda_e|=R$. Consider a one-to-one mapping: $M\rightarrow[\Lambda_b/\Lambda_e]$ which associates each message to a coset leader $\lambda_m\in\Lambda_b/\Lambda_e$. Alice selects a random lattice point $\lambda\in\Lambda_e\cap \mathcal{V}(\Lambda_s)$ ($\mathcal{V}(\Lambda)$ is the Voronoi region of $\Lambda$ defined in Sect. II) and transmits $X^N=\lambda+\lambda_m$. This coding scheme can achieve both reliability and strong secrecy. Although the existence of secrecy-good lattices has been proved, their explicit construction is still missing.

Polar lattices, which can be considered as the counterpart of polar codes in the Euclidean space, is a promising candidate for the Gaussian wiretap channel. They have already been proved to be AWGN-good \cite{yan2}, which means the reliability condition can be satisfied and we just need to take the strong secrecy into account. Motivated by \cite{polarsecrecy}, we propose an explicit polar lattice construction which can be proved to achieve both the strong secrecy and reliability over the mod-$\Lambda_s$ channel. Conceptually this new polar lattice can be regarded as an AWGN-good lattice $\Lambda_b$ nested within a secrecy-good lattice $\Lambda_e$. The design philosophy is mainly based on the equivalence between two well-known channels which greatly simplifies the construction for secrecy-good lattices. We note that the mod-$\Lambda$ front-end simplifies the problem under study in this paper. The coding system with a shaping method to achieve the secrecy capacity of the genuine Gaussian wiretap channel will be addressed in the journal paper.

The paper is organized as follows:
Section II presents the background of lattices. The relationship between two types of channels is investigated in Section III.
In Section IV, by using the equivalence proved in the previous section, we propose
our new polar lattice coding scheme and prove its secrecy and reliability. Some remarks are given in Section V.

\section{Background on Lattices}
\subsection{Definitions}
A lattice is a discrete subgroup of $\mathbb{R}^{n}$ which can be described by
\begin{eqnarray}
\Lambda=\{\bm \lambda=\mathbf{B}\mathbf{x}:\mathbf{x}\in\mathbb{Z}^{n}\}, \notag\
\end{eqnarray}
where the columns of the generator matrix $\mathbf{B}=[\mathbf{b}_{1}, \cdots, \mathbf{b}_{n}]$ are linearly independent.

For a vector $\mathbf{x}\in\mathbb{R}^{n}$, the nearest-neighbor quantizer associated with $\Lambda$ is $Q_{\Lambda}(\mathbf{x})=\text{arg}\min_{\bm \lambda\in\Lambda}\parallel\bm\lambda-\mathbf{x}\parallel$. We define the modulo lattice operation by $\mathbf{x} \text{ mod }\Lambda\triangleq \mathbf{x}-Q_{\Lambda}(\mathbf{x})$. The Voronoi region of $\Lambda$, defined by $\mathcal{V}(\Lambda)=\{\mathbf{x}:Q_{\Lambda}(\mathbf{x})=0\}$, specifies the nearest-neighbor decoding region. The Voronoi cell is one example of fundamental region of the lattice. A measurable set $\mathcal{R}(\Lambda)\subset\mathbb{R}^{n}$ is a fundamental region of the lattice $\Lambda$ if $\cup_{\bm\lambda\in\Lambda}(\mathcal{R}(\Lambda)+\bm\lambda)=\mathbb{R}^{n}$ and if $(\mathcal{R}(\Lambda)+\bm\lambda)\cap(\mathcal{R}(\Lambda)+\bm\lambda')$ has measure 0 for any $\bm\lambda\neq\bm\lambda'$ in $\Lambda$. The volume of a fundamental region is equal to that of the Voronoi region $\mathcal{V}(\Lambda)$, which is given by $V(\Lambda)=\mid\text{det}(\mathbf{B})\mid$.

For $\sigma>0$, we define the noise distribution of the AWGN channel with zero mean and variance $\sigma^{2}$ as
\begin{eqnarray}
f_{\sigma}(\mathbf{n})=\frac{1}{(\sqrt{2\pi}\sigma)^{n}}e^{-\frac{\parallel \mathbf{n}\parallel^{2}}{2\sigma^{2}}}, \notag\
\end{eqnarray}
for all $\mathbf{n}\in\mathbb{R}^{n}$. Given $\sigma$, the volume-to-noise ratio (VNR) of an $n$-dimension lattice $\Lambda$ is defined by
\begin{eqnarray}
\gamma_{\Lambda}(\sigma)\triangleq\frac{V(\Lambda)^\frac{2}{n}}{\sigma^2}. \notag\
\end{eqnarray}

We also need the $\Lambda$-periodic function
\begin{eqnarray}
f_{\sigma,\Lambda}(\mathbf{n})=\sum\limits_{\bm\lambda\in\Lambda}f_{\sigma,\bm\lambda}(\mathbf{n})=\frac{1}{(\sqrt{2\pi}\sigma)^{n}}\sum\limits_{\bm\lambda\in\Lambda}e^{-\frac{\parallel \mathbf{n}-\bm\lambda\parallel^{2}}{2\sigma^{2}}}, \notag\
\end{eqnarray}
for all $\mathbf{n}\in\mathbb{R}^n$.

We note that $f_{\sigma,\Lambda}(\mathbf{n})$ is a probability density function (PDF) if $\mathbf{n}$ is restricted to the the fundamental region $\mathcal{R}(\Lambda)$. This distribution for $\mathbf{n}\in\mathcal{R}(\Lambda)$ is actually the PDF of the $\Lambda$-aliased Gaussian noise, i.e., the Gaussian noise after the mod-$\Lambda$ operation \cite{forney6}. When $\sigma$ is small, the effect of aliasing becomes insignificant and the $\Lambda$-aliased Gaussian density $f_{\sigma,\Lambda}(\mathbf{n})$ approaches the Gaussian distribution. When $\sigma$ is large, $f_{\sigma,\Lambda}(\mathbf{n})$ approaches the uniform distribution.

In this paper, we mainly deal with the following two kinds of lattices.
\begin{definition}[AWGN-good]
A lattice $\Lambda_b$ is called AWGN-good if it is capable of achieving the Poltyrev capacity \cite{poltyrev} with a fast-vanishing error probability as long as its VNR is larger than $2\pi e$.
\end{definition}

\begin{definition}[Secrecy-good]
A lattice $\Lambda_e$ which results in fast-vanishing information leakage $I(M;Z^{N})$ is regarded as a secrecy-good lattice.
\end{definition}
Note that this definition is different from that in \cite{cong2}, which is based on the flatness factor.

\subsection{Mod-$\Lambda$ and $\Lambda/\Lambda'$ Channel}
A sublattice $\Lambda' \subset \Lambda$ induces a partition (denoted by $\Lambda/\Lambda'$) of $\Lambda$ into equivalence groups modulo $\Lambda'$. The order of the partition is denoted by $|\Lambda/\Lambda'|$, which is equal to the number of the cosets. If $|\Lambda/\Lambda'|=2$, we call this a binary partition. Let $\Lambda_{1}/\cdots/\Lambda_{r-1}/\Lambda_{r}$ for $r \geq 2$ be an $n$-dimensional lattice partition chain. If only one level is applied ($r=2$), the construction is known as ``Construction A". If multiple levels are used, the construction is known as ``Construction D" \cite[p.232]{yellowbook}. For each
partition $\Lambda_{\ell}/\Lambda_{\ell+1}$ ($1\leq \ell \leq r-1$) a code $C_{\ell}$ over $\Lambda_{\ell}/\Lambda_{\ell+1}$
selects a sequence of coset representatives $a_{\ell}$ in a set $A_{\ell}$ of representatives for the cosets of $\Lambda_{\ell+1}$.

A lattice constructed by the binary partition chain is referred as the binary lattice. This construction requires a set of nested linear binary codes.
The set of nested linear binary codes $C_{\ell}$ with block length $N$ and dimension of information bits $k_{\ell}$ are represented as
$[N,k_{\ell}]$ for $1\leq\ell\leq r-1$ and $C_{1}\subseteq C_{2}\cdot\cdot\cdot\subseteq C_{r-1}$. Let $\psi$ be the natural embedding of $\mathbb{F}_{2}^{N}$ into $\mathbb{Z}^{N}$, where $\mathbb{F}_{2}$ is the binary field. Let $\mathbf{b}_{1}, \mathbf{b}_{2},\cdots, \mathbf{b}_{N}$ be a basis of $\mathbb{F}_{2}^{N}$ such that $\mathbf{b}_{1},\cdots \mathbf{b}_{k_{\ell}}$ span $C_{\ell}$. When $n=1$, the binary lattice $L$ consists of all vectors of the form
\begin{eqnarray}
\sum_{\ell=1}^{r-1}2^{\ell-1}\sum_{j=1}^{k_{\ell}}\alpha_{j}^{(\ell)}\psi(\mathbf{b}_{j})+2^{r-1}\mathbf{l},
\label{constructionD}
\end{eqnarray}
where $\alpha_{j}^{(\ell)}\in\{0,1\}$ and $\mathbf{l}\in\mathbb{Z}^{N}$.

A mod-$\Lambda$ channel is a Gaussian channel with a modulo-$\Lambda$ operator in the front end \cite{multilevel1,forney6}. The capacity of the mod-$\Lambda$ channel is \cite{forney6}
\begin{eqnarray}\label{mod-capacity}
C(\Lambda, \sigma^{2})=\log  V(\Lambda)-h(\Lambda, \sigma^{2}),
\label{eqn:modchannelcapacity}
\end{eqnarray}
where $h(\Lambda, \sigma^{2})$ is the differential entropy of the $\Lambda$-aliased noise over $\mathcal{V}(\Lambda)$:
\begin{eqnarray}
h(\Lambda,\sigma^{2})=-\int_{\mathcal{V}(\Lambda)}f_{\sigma,\Lambda}(\mathbf{n})\text{ log } f_{\sigma,\Lambda}(\mathbf{n})d\mathbf{n}. \notag\
\end{eqnarray}
The differential entropy is maximized to $\log V(\Lambda)$ by the uniform distribution over $\mathcal{V}(\Lambda)$. It is known that the $\Lambda/\Lambda'$ channel (i.e., the mod-$\Lambda'$ channel whose input is drawn from $\Lambda\cap\mathcal{V}(\Lambda')$) is regular, and the optimum input distribution is uniform \cite{forney6}. Furthermore, the $\Lambda/\Lambda'$ channel is a binary-inputs memoryless symmetric channel (BMS) if $|\Lambda/\Lambda'|=2$ \cite{yan2}. The capacity of the $\Lambda/\Lambda'$ channel for Gaussian noise of variance $\sigma^2$ is given by \cite{forney6}
\begin{equation}\notag
\begin{split}
  C(\Lambda/\Lambda', \sigma^2) &= C(\Lambda', \sigma^2) - C(\Lambda, \sigma^2) \\
  &= h(\Lambda, \sigma^2) - h(\Lambda', \sigma^2) + \log (V(\Lambda')/V(\Lambda)).
\end{split}
\end{equation}

\section{Equivalent Channels}
We use a binary lattice partition chain $\Lambda_{1}/\Lambda_{2}/\cdot\cdot\cdot\Lambda_{r}$ to construct lattices so that we can use binary codes at each level. Here $\Lambda_s$ is for shaping and we assume $\Lambda_s \subset \Lambda_r^N$. Eve can do the mod-$\Lambda_r$ operation to remove all the random bits $\lambda$. With some overloading of notation, we also write $Z^N=(X^{N}+W_{e}^{N})\text{ mod }\Lambda_r^N$. It is known that $Z^N$ is a sufficient statistic for $(X^{N}+W_{e}^{N})\text{ mod }\Lambda_s$ \cite{forney6}. In other words, the mod-$\Lambda_r$ operation is information-lossless in the sense that $I(X^N;Z^N)=I(X^N;(X^{N}+W_{e}^{N})\text{ mod }\Lambda_s)$. As far as mutual information is concerned, we can use the mod-$\Lambda_r$ operator instead of the mod-$\Lambda_s$ operator here. We also assume the inputs to the binary encodes $C_1,\cdot\cdot\cdot,C_{r-1}$ are uniformly distributed and independently between each other. Therefore the signal points $X^N=(X_1^N,\cdot\cdot\cdot,X_{r-1}^N)\in \Lambda_1^N/\Lambda_r^N$ have equal a priori probabilities where $X_i^N$ is the output of the $i$-th binary encoder. And we note that all the lattice partitions are regular. The $\Lambda_i/\Lambda_{i+1}$ channel corresponding to the variance $\sigma^2$ is denoted by $W(\Lambda_i/\Lambda_{i+1},\sigma^2)$. The conditional PDF of this channel is $f_{\sigma,\Lambda_{i+1}}(z-x_i)$ \cite{forney6}.

By the chain rule of mutual information,
\begin{eqnarray}\label{eqn:chain_rule}
\begin{aligned}
I(Z;X)&=I(Z;X_1,X_2,\cdot\cdot\cdot,X_{r-1}) \\
&=I(Z;X_1)+I(Z;X_2|X_1)+\cdot\cdot\cdot \\
&+I(Z;X_{r-1}|X_1,\cdot\cdot\cdot,X_{r-2}).
\end{aligned}
\end{eqnarray}

Transmitting vectors $x$ with binary digits $x_i$, $i=1,\cdot\cdot\cdot r-1$ over the mod-$\Lambda_{r}$ channel can be separated into the parallel transmission of individual digits $x_i$ over $r-1$ equivalent channels, provided that $x_1,\cdot\cdot\cdot,x_{i-1}$ are known \cite{multilevel}. We investigate these equivalent channels, denoted by $W'(Z;X_i|X_1,\cdot\cdot\cdot,X_{i-1})$, in the framework of lattice codes.

From \cite{forney6}, the conditional PDF of $z$ is $f_Z(z|x)=f_{\sigma,\Lambda_r}(z-x)$, $z\in\mathcal{V}(\Lambda_r)$. Then based on the chain rule \eqref{eqn:chain_rule} and equivalent channels \cite[(5)]{multilevel}, the conditional PDF of the first equivalent channel with the input $x_1$ is
{\allowdisplaybreaks\begin{eqnarray}\notag
\begin{aligned}
f_{Z}(z|x_1)&=\frac{1}{\text{Pr}(\Lambda_2/\Lambda_r+x_1)}\sum_{x\in\Lambda_2/\Lambda_r+x_1}\text{Pr}(x)f_{Z}(z|x) \\
&\stackrel{(1)}=\frac{1}{|\Lambda_2/\Lambda_r+x_1|}\sum_{x\in\Lambda_2/\Lambda_r+x_1}f_{\sigma,\Lambda_r}(z-x) \\
&\stackrel{(2)}=\frac{1}{|\Lambda_2/\Lambda_r+x_1|}\sum_{x\in\Lambda_2/\Lambda_r+x_1}f_{\sigma,\Lambda_2}(z-x)\\
&\stackrel{(3)}=\frac{1}{|\Lambda_2/\Lambda_r|}\sum_{x\in\Lambda_2/\Lambda_r+x_1}f_{\sigma,\Lambda_2}(z-x),\:\: z\in\mathcal{V}(\Lambda_r),
\end{aligned}
\end{eqnarray}}
where $(1)$ is due to the uniform assumption, $(2)$ is because $\Lambda_2/\Lambda_r+x_1$ represents the mod-$\Lambda_2$ operation plus a shift $x_1$ inside $\mathcal{V}(\Lambda_r)$, $(3)$ is due to the fact that $|\Lambda_2/\Lambda_r+0|=|\Lambda_2/\Lambda_r+1|$ and $f_{\sigma,\Lambda_2}(z)$ $z\in\mathcal{V}(\Lambda_2)$ is the conditional PDF of the mod-$\Lambda_2$ channel. We note that $f_{Z}(z|x_1)$ is $\Lambda_2$-periodic. We note that this is the same result as \cite[Lemma 6]{forney6}.
\begin{remark}
If the inputs are uniformly distributed, $W'(Z;X_i|X_1,\cdot\cdot\cdot,X_{i-1})$ can also be proved to be a BMS channel in the same way that the  channel $W(\Lambda_i/\Lambda_{i+1},\sigma^2)$ is proved to be a BMS channel in \cite{yan2}.
\end{remark}

The conditional differential entropy $h(Z|X_1)$ is \cite{forney6}
\begin{eqnarray}\notag
\begin{aligned}
h(Z|X_1)&=-\int_{\mathcal{V}(\Lambda_r)}f_{Z}(z|x_1)\cdot\log f_{Z}(z|x_1)dz \\
&=\log|\Lambda_2/\Lambda_r|+h(\Lambda_2,\sigma^2).
\end{aligned}
\end{eqnarray}

By \cite[Lemma 6]{forney6}, with the uniform inputs, the differential entropy $h(Z)=\log|\Lambda_1/\Lambda_r|+h(\Lambda_1,\sigma^2)$. Therefore the mutual information of the first equivalent channel is
\begin{eqnarray}\notag
\begin{aligned}
I(Z;X_1)&=h(Z)-h(Z|X_1) \\
&=h(\Lambda_1,\sigma^2)-h(\Lambda_2,\sigma^2)+\log|\Lambda_1/\Lambda_2|,
\end{aligned}
\end{eqnarray}
which is equal to the mutual information of the $\Lambda_1/\Lambda_2$ channel $I(W(\Lambda_1/\Lambda_2,\sigma^2))$.

One can verify that the mutual information of the $i$-th equivalent channel $I(Z;X_i|X_1,\cdot\cdot\cdot,X_{i-1})$ is equal to the mutual information of the $\Lambda_i/\Lambda_{i+1}$ channel in the same fashion.

This result motivates a stronger statement, namely the polar codes constructed from that these two channels are the same. To see this, it suffices to show that the mutual information and Bhattacharyya parameters of the resultant bit-channels which are polarized from $W(\Lambda_1/\Lambda_2,\sigma^2)$ and $W'(Z;X_1)$ are the same. Let $Q(z|x)$ be a BMS channel with binary input alphabet $\mathcal{X}\in\{0,1\}$ and output alphabet $\mathcal{Z}\in\mathbb{R}$. Consider a random vector $U^2$ that is uniformly distributed over $\mathcal{X}^2$. Let $X^2=U^2\cdot\left[\begin{smallmatrix}1&0\\1&1\end{smallmatrix}\right]$ be the input to two independent copies of the channel $Q$ and let $Z^2$ be the corresponding outputs. After the channel combining and splitting, the resultant bit-channels \cite{polarcodes} are defined as
\begin{eqnarray}
\begin{aligned}
&Q_2^{(1)}(z^2|u_1)=\frac{1}{2}\sum_{u_2}Q(z_1|u_1\oplus u_2)Q(z_2|u_2), \\ \notag\
&Q_2^{(2)}(z^2,u_2|u_1)=\frac{1}{2}Q(z_1|u_1\oplus u_2)Q(z_2|u_2). \notag\
\end{aligned} \notag\
\end{eqnarray}

Then we apply this polarization transformation to $W(\Lambda_1/\Lambda_2,\sigma^2)$ and $W'(Z;X_1)$, respectively. After some mathematic manipulations, we get
\begin{eqnarray}
\begin{aligned}
W_2^{(1)}(z^2|0)&=\frac{1}{2}(f_{\sigma,\Lambda_2}(z_1)f_{\sigma,\Lambda_2}(z_2) \\
&+f_{\sigma,\Lambda_2}(z_1-1)f_{\sigma,\Lambda_2}(z_2-1)), \\ \notag\
W_2^{(1)}(z^2|1)&=\frac{1}{2}(f_{\sigma,\Lambda_2}(z_1-1)f_{\sigma,\Lambda_2}(z_2) \\
&+f_{\sigma,\Lambda_2}(z_1)f_{\sigma,\Lambda_2}(z_2-1)), \\ \notag\
\end{aligned} \notag\
\end{eqnarray}
and
\begin{eqnarray}
\begin{aligned}
W_2^{\prime(1)}&(z^2|0)= \\
&\frac{1}{2|\Lambda_2/\Lambda_r|^2}\Big(\sum_{x\in\Lambda_2/\Lambda_r}f_{\sigma,\Lambda_2}(z_1-x)f_{\sigma,\Lambda_2}(z_2-x) \\ \notag\
&+\sum_{x\in\Lambda_2/\Lambda_r}f_{\sigma,\Lambda_2}(z_1-x-1)f_{\sigma,\Lambda_2}(z_2-x-1)\Big),\\ \notag\
W_2^{\prime(1)}&(z^2|1)= \\
&\frac{1}{2|\Lambda_2/\Lambda_r|^2}\Big(\sum_{x\in\Lambda_2/\Lambda_r}f_{\sigma,\Lambda_2}(z_1-x-1)f_{\sigma,\Lambda_2}(z_2-x) \\ \notag\
&+\sum_{x\in\Lambda_2/\Lambda_r}f_{\sigma,\Lambda_2}(z_1-x)f_{\sigma,\Lambda_2}(z_2-x-1)\Big).\\ \notag\
\end{aligned} \notag\
\end{eqnarray}

By the definitions of the mutual information and the Bhattacharyya parameter of a BMS channel \cite{polarcodes}
\begin{eqnarray}
\begin{cases}
&I(Q)\triangleq\int\sum_{x}\frac{1}{2}Q(y|x)\log\frac{Q(y|x)}{\frac{1}{2}Q(y|0)+\frac{1}{2}Q(y|1)}dy \\ \notag\
&Z(Q)\triangleq\int \sqrt{Q(y|0)Q(y|1)}dy \notag\
\end{cases},
\end{eqnarray}
we have
\begin{eqnarray}
\begin{cases}
&I(W_2^{(1)}(z^2|x_1))=I(W_2^{\prime(1)}((z^2|x_1))) \\ \notag\
&Z(W_2^{(1)}(z^2|x_1))=Z(W_2^{\prime(1)}(z^2|x_1))) \notag\
\end{cases}.
\end{eqnarray}
And it is not difficult to verify that
\begin{eqnarray}
\begin{cases}
&I(W_2^{(1)}(z^2,x_2|x_1))=I(W_2^{\prime(1)}((z^2,x_2|x_1))) \\ \notag\
&Z(W_2^{(1)}(z^2,x_2|x_1))=Z(W_2^{\prime(1)}(z^2,x_2|x_1))) \notag\
\end{cases}.
\end{eqnarray}

Since the construction of polar codes are based on either the mutual information or the Bhattacharyya parameter of the bit-channels, polar codes constructed for $W(\Lambda_1/\Lambda_2,\sigma^2)$ and $W'(Z;X_1)$ are the same. The validation of the equivalence between the $i$-th channel $W(\Lambda_i/\Lambda_{i+1},\sigma^2)$ and $W'(Z;X_i|X_1,\cdot\cdot\cdot,X_{i-1})$ is similar.

We summarize the foregoing analysis in the following lemma:
\begin{lemma}\label{lemma}
Consider a lattice $L$ constructed by a binary lattice partition chain $\Lambda_{1}/\cdot\cdot\cdot/\Lambda_{r}$. Constructing a polar code for the $i$-th equivalent binary-input channel $W'(Z;X_i|X_1,\cdot\cdot\cdot,X_{i-1})$ defined by the right hand side of \eqref{eqn:chain_rule} is equivalent to constructing a polar code for the channel $W(\Lambda_i/\Lambda_{i+1},\sigma^2)$.
\end{lemma}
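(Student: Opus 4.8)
The plan is to reduce the equivalence of the two polar-code constructions to a single structural fact: that the $i$-th equivalent channel $W'(Z;X_i\mid X_1,\cdots,X_{i-1})$ and the lattice channel $W(\Lambda_i/\Lambda_{i+1},\sigma^2)$ are equivalent BMS channels, in the sense that they are related by an input-independent, information-lossless output map. Since the construction of a polar code---the selection of the information and frozen index sets---depends on the underlying channel only through the mutual information $I$ and the Bhattacharyya parameter $Z$ of the synthesized bit-channels $W_N^{(j)}$, it suffices to show that the two families of bit-channels obtained by the recursive polarization transform share the same $I$ and the same $Z$ at every level $n$ (with $N=2^n$) and every index $j$.

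First I would isolate the equivalence at the level of the base channels. The transition density of $W'(Z;X_i\mid\cdots)$ is supported on $\mathcal{V}(\Lambda_r)$ but, as exhibited for $i=1$ in the computation of $f_Z(z\mid x_1)$, it is $\Lambda_{i+1}$-periodic. Consequently the reduction $Z\mapsto Z\bmod\Lambda_{i+1}$ is a sufficient statistic for $X_i$, and the channel from $X_i$ to $Z\bmod\Lambda_{i+1}$ is exactly $W(\Lambda_i/\Lambda_{i+1},\sigma^2)$, with transition density $f_{\sigma,\Lambda_{i+1}}(z-x_i)$. Because this reduction acts on the output alone and does not depend on the input, the two channels are equivalent and in particular share the same $I$ and $Z$; this recovers the base-level equalities already displayed for $W_2^{(1)}$ versus $W_2^{\prime(1)}$ and for the second synthesized channel.

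Next I would show that this equivalence is propagated by the polarization recursion. The key observation is that the combining-and-splitting transform is defined identically for both channels and that applying the coordinate-wise output reduction $z^2\mapsto(z_1\bmod\Lambda_{i+1},\,z_2\bmod\Lambda_{i+1})$ commutes with the transform: combining two copies of $W'$ and then reducing each output coordinate yields the same synthesized channel as first reducing and then combining, precisely because the reduction is input-independent and the periodic sums factor through the modulo operation. An induction on the number of levels $n$ then shows that $W_N^{(j)}$ and $W_N^{\prime(j)}$ remain related by a coordinate-wise sufficient statistic for all $j$, so they carry identical $I$ and $Z$; the base case $n=1$ is exactly the explicit verification in the excerpt. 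Since the polar code is determined by these parameters, the two constructions coincide, and the identical argument for general $i$ completes the proof.

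The main obstacle is the commutation claim in the inductive step: one must verify carefully that the output reduction modulo $\Lambda_{i+1}$ interchanges with the XOR-based combining of two channel copies, i.e. that \emph{equivalence under an input-independent output map} is genuinely an invariant of the polar transform rather than merely a property that happens to hold at the first level. Making this rigorous requires tracking the $\Lambda_{i+1}$-periodic structure of the transition densities through the recursion and checking that it is preserved---equivalently, that the sufficient statistic for a synthesized channel is the coordinate-wise reduction of the sufficient statistics of its constituents. Once this invariance is in place the remaining computations are routine, being direct generalizations of the level-one manipulations already shown.
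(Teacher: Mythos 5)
Your proposal is correct, and it reaches the conclusion by a more structural route than the paper. The paper's own argument is computational: it writes out the transition densities of the synthesized channels $W_2^{(1)}$ and $W_2^{\prime(1)}$ explicitly after one step of combining and splitting, observes from the definitions of $I(\cdot)$ and $Z(\cdot)$ that both parameters coincide for the two bit-channels at $N=2$, and then concludes that the constructed polar codes are identical (the extension to all block lengths and to general $i$ is left as ``similar''). You instead isolate the reason the computation works: the transition density of $W'(Z;X_i\mid X_1,\cdots,X_{i-1})$ is $\Lambda_{i+1}$-periodic, so the input-independent map $z\mapsto z\bmod\Lambda_{i+1}$ is a sufficient statistic whose image channel is exactly $W(\Lambda_i/\Lambda_{i+1},\sigma^2)$; the two channels are therefore mutually degraded, and this equivalence is preserved through every level of the polar recursion, which forces equality of $I$ and $Z$ for all synthesized bit-channels $W_N^{(j)}$ and $W_N^{\prime(j)}$, not just at $N=2$. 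This buys something real: equality of $I$ and $Z$ of the level-one bit-channels does not by itself imply anything about deeper levels, so your induction supplies precisely the step the paper glosses over (the paper's displayed densities do show the periodic-extension relationship persisting after one transform, which is the seed of your induction, but the paper never states or uses this). One comment on what you call the ``main obstacle'': the commutation of an output-only (sufficient-statistic) reduction with channel combining and splitting is not something you need to struggle with --- preservation of stochastic degradation under the polar transform is a standard lemma in the polar-coding literature (it appears in the same source the paper already cites for its degradation facts), and applying it in both directions of the mutual degradation immediately gives your invariant. One small point to tighten: for $i>1$ the output of the equivalent channel includes $X_1,\cdots,X_{i-1}$, so the sufficient statistic should be taken as $(z-x_1-\cdots-x_{i-1})\bmod\Lambda_{i+1}$ (reduce after removing the known coset shift); with that adjustment your argument for general $i$ goes through verbatim.
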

\begin{remark}
This lemma also holds for the AWGN channel (without the mod-$\Lambda$ front-end). The construction of polar lattices for the AWGN channel are explicitly explained in \cite{yan2}, where nested polar codes are constructed based on a set of $W(\Lambda_i/\Lambda_{i+1},\sigma^2)$ channels. We note that the channel $W(\Lambda_i/\Lambda_{i+1},\sigma^2)$ is degraded with respect to the channel $W(\Lambda_{i+1}/\Lambda_{i+2},\sigma^2)$.
\end{remark}
\begin{remark}
One can expect the equivalence in a more general sense than the construction of polar codes. The proof may be based on the equivalence between coset decoding \cite{BK:Zamir} and maximum likelihood (ML) decoding of the fine lattice $\Lambda_i$ in the presence of the $\Lambda_{i+1}$-aliased Gaussian noise.
\end{remark}

\section{Achieving Strong Secrecy and Reliability}
In this section, we demonstrate the polar lattice constructed from a set of nested polar codes can achieve strong secrecy for the mod-$\Lambda_s$ Gaussian wiretap channel. The construction of component polar codes follows the idea in \cite{polarsecrecy}. We still use the notations in \cite{NewPolarSchemeWiretap} to restate their construction. Define the sets of very reliable and very unreliable indices for a channel $Q$ and $0<\beta<0.5$:
\begin{eqnarray}\label{GoodBadChannel}
\begin{aligned}
\mathcal{G}(Q)&=\{i:Z(Q_{N}^{(i)})\leq 2^{-N^{\beta}}\}, \\
\mathcal{N}(Q)&=\{i:I(Q_{N}^{(i)})\leq 2^{-N^{\beta}}\}.
\end{aligned}
\end{eqnarray}
The following are immediate results of \cite{polarcodes1} and \cite[Lemma 4.7]{polarchannelandsource}:
\begin{eqnarray}
\begin{aligned}
&\lim_{N\rightarrow\infty}|\mathcal{G}(Q)|/N=C(Q), \\ \notag\
&\lim_{N\rightarrow\infty}|\mathcal{N}(Q)|/N=1-C(Q), \notag\
\end{aligned} \notag\
\end{eqnarray}
and since $W$ is degraded with respect to $V$,
\begin{eqnarray}\label{eqn:binarysecrecyrate}
\begin{aligned}
&\lim_{N\rightarrow\infty}|\mathcal{G}(V)\cap\mathcal{N}(W)|/N=C(V)-C(W), \\
&\lim_{N\rightarrow\infty}|\mathcal{G}(V)^{c}\cap\mathcal{N}(W)^{c}|/N=0.
\end{aligned}
\end{eqnarray}
The indices in $\mathcal{G}(V)$ and $\mathcal{N}(W)$ are the reliable and the secure indices. The index set can be partitioned into the following four sets:
\begin{eqnarray}
\begin{aligned}
&\mathcal{A}=\mathcal{G}(V)\cap \mathcal{N}(W) \\ \notag\
&\mathcal{B}=\mathcal{G}(V)\cap \mathcal{N}(W)^{c}\\ \notag\
&\mathcal{C}=\mathcal{G}(V)^{c}\cap \mathcal{N}(W) \\ \notag\
&\mathcal{D}=\mathcal{G}(V)^{c}\cap \mathcal{N}(W)^{c}. \notag\
\end{aligned} \notag\
\end{eqnarray}
Unlike the standard polar coding, the bit-channels are partitioned into three parts: A set $\mathcal{M}$ that carries the confidential message bits, a set $\mathcal{R}$ that carries random bits, and a set $\mathcal{F}$ of frozen bits which are known to both Bob and Eve prior to transmission. It is shown that $\lim_{N\rightarrow\infty}I(M;Z^N)=0$ if we assign the bits as follows:
\begin{eqnarray}\label{eqn:assign}
\begin{aligned}
&\mathcal{A}=\mathcal{M} \\
&\mathcal{B}\subseteq\mathcal{R} \\
&\mathcal{C} \subseteq\mathcal{F} \\
&\mathcal{D} \subseteq\mathcal{R}.
\end{aligned}
\end{eqnarray}
\begin{figure*}[htp]
    \centering
    \includegraphics[width=15cm,height=6cm,keepaspectratio]{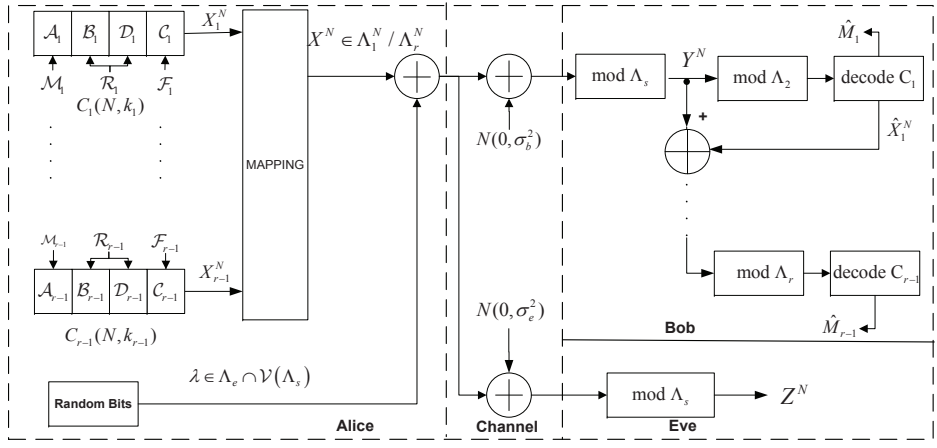}
    \caption{The multilevel lattice coding system over the mod-$\Lambda_s$ Gaussian wiretap channel.}
    \label{fig:eve}
\end{figure*}
\subsection{Gaussian Wiretap Coding Scheme}
Now it is ready to introduce the new polar lattice for the mod-$\Lambda_s$ Gaussian wiretap channel shown in Fig. \ref{fig:eve}. A polar lattice $L$ is constructed by a set of nested polar codes $C_{1}(N,k_{1})\subseteq C_{2}(N,k_{2})\subseteq\cdot\cdot\cdot\subseteq C_{r-1}(N,k_{r-1})$ and a binary lattice partition chain $\Lambda_1/\Lambda_2/\cdot\cdot\cdot/\Lambda_r$. The block length of polar codes is $N$. Alice splits the message $M$ into $M_1,\cdot\cdot\cdot,M_{r-1}$. We follow \eqref{eqn:assign} to assign bits in the component polar codes to achieve strong secrecy. Define $V_i=W(\Lambda_i/\Lambda_{i+1},\sigma_b^2)$ and $W_i=W(\Lambda_i/\Lambda_{i+1},\sigma_e^2)$ and $W_i$ is degraded with respect to $V_i$ for $1\leq i\leq r-1$. Then we can get $\mathcal{A}_i$, $\mathcal{B}_i$, $\mathcal{C}_i$ and $\mathcal{D}_i$ for $1\leq i\leq r-1$. Similarly, we assign the bits as follows
\begin{eqnarray}\label{eqn:newassign}
\begin{aligned}
&\mathcal{A}_i=\mathcal{M}_i \\
&\mathcal{B}_i\subseteq\mathcal{R} \\
&\mathcal{C}_i \subseteq\mathcal{F} \\
&\mathcal{D}_i \subseteq\mathcal{R}
\end{aligned}
\end{eqnarray}
for $1\leq i\leq r-1$. Since $W_i( \text{and }V_i)$ is degraded with respect to $W_{i+1}( \text{and }V_{i+1})$, it is easy to obtain that $\mathcal{C}_i\supseteq\mathcal{C}_{i+1}$ which means $\mathcal{A}_{i}\cup\mathcal{B}_{i}\cup\mathcal{D}_i\subseteq\mathcal{A}_{i+1}\cup\mathcal{B}_{i+1}\cup\mathcal{D}_{i+1}$. This construction is clearly a lattice construction as polar codes constructed on each level are nested. We skip the proof of nested polar codes here. A similar proof can be found in \cite{yan2}.

Interestingly, this polar lattice construction generates an AWGN-good lattice $\Lambda_b$ and a secrecy-good lattice $\Lambda_e$ simultaneously.  $\Lambda_b$ is constructed from a set of nested polar codes $C_{1}(N,|\mathcal{A}_1|+|\mathcal{B}_1|+|\mathcal{D}_1|)\subseteq\cdot\cdot\cdot\subseteq C_{r-1}(N,|\mathcal{A}_{r-1}|+|\mathcal{B}_{r-1}|+|\mathcal{D}_{r-1}|)$ and the lattice partition chain $\Lambda_1/\cdot\cdot\cdot/\Lambda_r$, while $\Lambda_e$ is constructed from a set of nested polar codes $C_{1}(N,|\mathcal{B}_1|+|\mathcal{D}_1|)\subseteq\cdot\cdot\cdot\subseteq C_{r-1}(N,|\mathcal{B}_{r-1}|+|\mathcal{D}_{r-1}|)$ and the same lattice partition chain $\Lambda_1/\cdot\cdot\cdot/\Lambda_r$. More details about the AWGN-goodness of $\Lambda_b$ are given in Sect. IV-C. It is clear that $\Lambda_e\subset\Lambda_b$. Our coding scheme is equivalent to the coset coding scheme we introduced in Sect. I, which maps the confidential message $M$ to the coset leaders $\lambda_m\in\Lambda_b/\Lambda_e$.

\subsection{Strong Secrecy and Secrecy Rate}
By using the above assignments and \cite[Proposition 16]{polarsecrecy}, we have
\begin{eqnarray}
\begin{aligned}
I(M_i;Z_i^N)\leq N2^{-N^\beta},\notag\
\end{aligned} \notag\
\end{eqnarray}
where $Z_i^N=Z^N \text{ mod } \Lambda_{i+1}$. In other words, the employed polar code for the channel $W(\Lambda_i/\Lambda_{i+1},\sigma_e^2)$ can guarantee that the mutual information between the input message and the output is upper bounded by $N2^{-N^\beta}$. From Lemma \ref{lemma}, this polar code can also guarantee the same upper bound on the mutual information between the input message and the output of the channel $W'(Z;X_i|X_1,\cdot\cdot\cdot,X_{i-1})$ as shown in the following inequality:
\begin{eqnarray}
\begin{aligned}
I(M_i;Z^N,M_1,\cdot\cdot\cdot,M_{i-1})\leq N2^{-N^\beta}.\notag\
\end{aligned} \notag\
\end{eqnarray}

Recall $Z^N$ is the signal received by Eve after mod-$\Lambda_r$. From the chain rule of mutual information,
\begin{align}
&I(Z^N;M) \label{eqn:upperbound}\\
&=\sum_{i=1}^{r-1}I(Z^N;M_i|M_1,\cdot\cdot\cdot,M_{r-1}) \notag\\
&=\sum_{i=1}^{r-1}h(M_i|M_1,\cdot\cdot\cdot,M_{r-1})-h(M_i|Z^N,M_1,\cdot\cdot\cdot,M_{r-1})\notag\\
&=\sum_{i=1}^{r-1}h(M_i)-h(M_i|Z^N,M_1,\cdot\cdot\cdot,M_{r-1})\notag\\
&=\sum_{i=1}^{r-1}I(M_i;Z^N,M_1,\cdot\cdot\cdot,M_{i-1})\notag\\
&\leq rN2^{-N^\beta}. \notag\
\end{align}
Therefore strong secrecy is achieved as $\lim_{N\rightarrow\infty}I(M;Z^N)=0$.

Now we present the main theorem of the paper.
\begin{theorem}
Consider a polar lattice $L$ constructed according to \eqref{eqn:newassign} with the binary lattice partition chain $\Lambda_{1}/\cdot\cdot\cdot/\Lambda_{r}$ and $r-1$ binary nested polar codes with block length $N$. By scaling $\Lambda_1$ and $r$ to satisfy the following conditions:
\begin{enumerate}[(i)]
  \item\label{ite:fir} $h(\Lambda_1,\sigma_b^2)\rightarrow \log V(\Lambda_1)$
  \item\label{ite:sec} $h(\Lambda_r,\sigma_e^2)\rightarrow\frac{1}{2}\log(2\pi e\sigma_e^2)$,
\end{enumerate}
given $\sigma_e^2>\sigma_b^2$, as $N\rightarrow\infty$, all strong secrecy rates $R$ satisfying
\begin{eqnarray}
\begin{aligned}
R<\frac{1}{2}\log\frac{\sigma_e^2}{\sigma_b^2}\notag\
\end{aligned} \notag\
\end{eqnarray}
are achievable using the polar lattice $L$ on the mod-$\Lambda_s$ Gaussian wiretap channel.
\end{theorem}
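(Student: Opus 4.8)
The reliability and the secrecy requirements are essentially already in hand, so the plan is to reduce the theorem to a rate computation. Secrecy follows from the chain-rule bound $I(M;Z^N)\le rN2^{-N^\beta}\to 0$ established just above the statement, and reliability follows from the AWGN-goodness of $\Lambda_b$ (Sect. IV-C and \cite{yan2}), since the message and random bits are placed on indices in $\mathcal{G}(V_i)$ and the remaining bad-for-Bob indices form the vanishing fraction $|\mathcal{D}_i|/N\to 0$, which can be handled at negligible rate cost. The remaining task is therefore to show that the message rate $R=\frac1N\sum_{i=1}^{r-1}|\mathcal{A}_i|$ approaches $\frac12\log(\sigma_e^2/\sigma_b^2)$ under conditions (i)--(ii), for a fixed finite chain length $r$, after which $N\to\infty$ drives both the leakage and the error probability to zero.

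First I would identify the limiting rate. By \eqref{eqn:binarysecrecyrate}, $|\mathcal{A}_i|/N=|\mathcal{G}(V_i)\cap\mathcal{N}(W_i)|/N\to C(V_i)-C(W_i)$, so $R\to\sum_{i=1}^{r-1}\big(C(V_i)-C(W_i)\big)$. Expanding each term with the $\Lambda/\Lambda'$ capacity formula $C(\Lambda_i/\Lambda_{i+1},\sigma^2)=h(\Lambda_i,\sigma^2)-h(\Lambda_{i+1},\sigma^2)+\log\!\big(V(\Lambda_{i+1})/V(\Lambda_i)\big)$ and summing over $i$, the entropy differences and the log-volume ratios each telescope, so the two sums collapse to $C(\Lambda_1/\Lambda_r,\sigma_b^2)-C(\Lambda_1/\Lambda_r,\sigma_e^2)$ and the volume terms cancel, leaving
\[
R\to\underbrace{\big[h(\Lambda_1,\sigma_b^2)-h(\Lambda_1,\sigma_e^2)\big]}_{T_1}+\underbrace{\big[h(\Lambda_r,\sigma_e^2)-h(\Lambda_r,\sigma_b^2)\big]}_{T_2}.
\]

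Next I would evaluate $T_1$ and $T_2$ using that $h(\Lambda,\sigma^2)$ is bounded above by $\log V(\Lambda)$ (the uniform maximizer) and is monotonically increasing in $\sigma$ (equivalently, $C(\Lambda,\sigma^2)$ is decreasing in $\sigma$, since the larger-noise mod-$\Lambda$ channel is degraded). For $T_1$: condition (i) gives $h(\Lambda_1,\sigma_b^2)\to\log V(\Lambda_1)$, and since $\sigma_b^2<\sigma_e^2$ we have $h(\Lambda_1,\sigma_b^2)\le h(\Lambda_1,\sigma_e^2)\le\log V(\Lambda_1)$, so the squeeze forces $h(\Lambda_1,\sigma_e^2)\to\log V(\Lambda_1)$ as well and $T_1\to 0$. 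For $T_2$: condition (ii) says $\Lambda_r$ is scaled coarse enough that the $\Lambda_r$-aliased noise at variance $\sigma_e^2$ is effectively non-aliased, $h(\Lambda_r,\sigma_e^2)\to\frac12\log(2\pi e\sigma_e^2)$; because $\sigma_b^2<\sigma_e^2$ the aliasing at $\sigma_b^2$ is no larger, so $h(\Lambda_r,\sigma_b^2)\to\frac12\log(2\pi e\sigma_b^2)$, giving $T_2\to\frac12\log(\sigma_e^2/\sigma_b^2)$. Combining yields $R\to\frac12\log(\sigma_e^2/\sigma_b^2)$, so any $R<\frac12\log(\sigma_e^2/\sigma_b^2)$ is attained by fixing a finite chain meeting (i)--(ii) to within $\epsilon$ and then letting $N\to\infty$.

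The main obstacle is the domination step inside $T_2$: rigorously transferring the non-aliasing conclusion from $\sigma_e$ down to $\sigma_b$. Intuitively a more concentrated Gaussian aliases less, but the excess $h(\Lambda_r,\sigma^2)-\frac12\log(2\pi e\sigma^2)$ is not globally monotone in $\sigma$, so I would control it only in the small-aliasing regime, via a flatness-factor / tail bound on $f_{\sigma,\Lambda_r}$ as in \cite{cong2}, showing the excess is $O(\epsilon)$ at $\sigma_e^2$ and hence also at $\sigma_b^2$. A secondary point is the order of limits: conditions (i)--(ii) require a finer $\Lambda_1$ and a larger $r$, whereas the polar-coding bounds need $N\to\infty$ at fixed $r$; I would therefore choose $r$ as a (large) constant first so that the leakage $rN2^{-N^\beta}\to 0$ and the union bound over the $r-1$ levels for Bob's decoding error both vanish as $N\to\infty$, and only afterwards let $r\to\infty$ to close the $\epsilon$-gap to $\frac12\log(\sigma_e^2/\sigma_b^2)$.
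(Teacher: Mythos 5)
Your proposal is correct and follows essentially the same route as the paper: the identical telescoping of $\lim_{N\rightarrow\infty}\sum_i|\mathcal{A}_i|/N$ through $C(\Lambda/\Lambda',\sigma^2)=C(\Lambda',\sigma^2)-C(\Lambda,\sigma^2)$ down to $\bigl[h(\Lambda_1,\sigma_b^2)-h(\Lambda_1,\sigma_e^2)\bigr]+\bigl[h(\Lambda_r,\sigma_e^2)-h(\Lambda_r,\sigma_b^2)\bigr]$, with secrecy and reliability inherited from \eqref{eqn:upperbound} and Sect.~IV-C exactly as the paper does, so your $T_1$ and $T_2$ are precisely the paper's $-\epsilon_1$ and $\frac{1}{2}\log(\sigma_e^2/\sigma_b^2)-(\epsilon_e-\epsilon_b)$. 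The only divergence is at the step the paper merely asserts ($\epsilon_e-\epsilon_b\geq 0$): your squeeze argument for $T_1$ and small-aliasing tail bound for $T_2$ are valid substitutes---and in fact the monotonicity you doubt does hold (de Bruijn's identity plus Fisher-information data processing on $\mathbb{R}/\Lambda$ give $\frac{d}{d\sigma^2}\bigl[h(\sigma^2)-h(\Lambda,\sigma^2)\bigr]\geq 0$)---so both the paper's assertion and your more cautious workaround go through.
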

\begin{proof}
By \eqref{eqn:binarysecrecyrate} and \eqref{eqn:newassign},
{\allowdisplaybreaks\begin{eqnarray}\notag\
\begin{aligned}
\lim_{N\rightarrow\infty} R&=\sum_{i=1}^{r-1}\lim_{N\rightarrow\infty}\frac{|\mathcal{A}_i|}{N} \\
&=\sum_{i=1}^{r-1}C(V_i)-C(W_i) \\
&=\sum_{i=1}^{r-1}C(W(\Lambda_i/\Lambda_{i+1},\sigma_b^2))-C(W(\Lambda_i/\Lambda_{i+1},\sigma_e^2)) \\
&=C(W(\Lambda_1/\Lambda_{r},\sigma_b^2))-C(W(\Lambda_1/\Lambda_{r},\sigma_e^2))\\
&=C(\Lambda_r, \sigma_b^2)-C(\Lambda_1, \sigma_b^2)-C(\Lambda_r, \sigma_e^2)+C(\Lambda_1, \sigma_e^2) \\
&=h(\Lambda_r, \sigma_e^2)-h(\Lambda_r, \sigma_b^2)+h(\Lambda_1, \sigma_b^2)-h(\Lambda_1, \sigma_e^2) \\
&=\frac{1}{2}\log\frac{\sigma_e^2}{\sigma_b^2}-(\epsilon_e-\epsilon_b)-\epsilon_1,
\end{aligned}
\end{eqnarray}}
where
\begin{equation} \notag\
\begin{cases} \epsilon_1=h(\Lambda_1, \sigma_e^2)-h(\Lambda_1, \sigma_b^2)\geq0, \\
\epsilon_b=h(\sigma_b^2)-h(\Lambda_r, \sigma_b^2)=\frac{1}{2}\log(2\pi e\sigma_b^2)-h(\Lambda_r, \sigma_b^2)\geq0, \\
\epsilon_e=h(\sigma_e^2)-h(\Lambda_r, \sigma_e^2)=\frac{1}{2}\log(2\pi e\sigma_e^2)-h(\Lambda_r, \sigma_e^2)\geq0
\end{cases}
\end{equation}
and $\epsilon_e-\epsilon_b\geq0$.

$\epsilon_1$ can be made arbitrarily small by scaling $\Lambda_1$ such that both $h(\Lambda_1, \sigma_e^2)$ and $h(\Lambda_1, \sigma_b^2)$ are sufficiently close to $\log V(\Lambda_1)$. In polar lattices for AWGN-goodness \cite{yan2}, we assume $h(\Lambda_{r'}, \sigma_b^2)\approx\frac{1}{2}\log(2\pi e\sigma_b^2)$ for some $r'<r$. Since $\epsilon_b<\epsilon_e$, $\Lambda_{r'}$ is not enough for the wiretap channel. We need to increase the number of levels until $h(\Lambda_r, \sigma_e^2)\approx\frac{1}{2}\log(2\pi e\sigma_e^2)$ such that both $\epsilon_b$ and $\epsilon_e$ are almost 0. Therefore by scaling $\Lambda_1$ and adjusting $r$, the secrecy rate can get arbitrarily close to $\frac{1}{2}\log\frac{\sigma_e^2}{\sigma_b^2}$.
\end{proof}
\begin{remark}
We note that these are just mild conditions. When the $\sigma_e^2=4$ and $\sigma_b^2=1$, the gap from $\frac{1}{2}\log\frac{\sigma_e^2}{\sigma_b^2}$ is only $ 0.05$ when we choose $r=3$ and partition chain $\alpha(\mathbb{Z}/2\mathbb{Z}/4\mathbb{Z})$ with scaling factor $\alpha=2.5$.
\end{remark}
\begin{remark}
$\frac{1}{2}\log\frac{\sigma_e^2}{\sigma_b^2}$ is an upper bound on the secrecy capacity of the mod-$\Lambda_s$ Gaussian wiretap channel since it equals to the secrecy capacity of the Gaussian wiretap channel as the signal power goes to infinity. Latices codes can not be better than this. It is noteworthy that we successfully remove the $\frac{1}{2}$-nat gap in the achievable secrecy rate derived in \cite{cong2} which is caused by the limitation of the $L^{\infty}$ distance associated with the flatness factor.
\end{remark}
\begin{remark}
The two conditions \eqref{ite:fir} and \eqref{ite:sec} are the design criteria of secrecy-good lattices. The construction for secrecy-good lattices requires more levels than the construction of AWGN-good lattices.
\end{remark}

\subsection{Reliability}
Here how to assign $\mathcal{D}$ is a problem. Assigning freezing bits to $\mathcal{D}$ guarantees the reliability but achieves the weak secrecy, whereas assigning random bits to $\mathcal{D}$ guarantees the strong secrecy but may violate the reliability requirement because $\mathcal{D}$ may be nonempty. In the original work on using polar codes to achieve the secrecy capacity for symmetric and binary-input wiretap channels \cite{polarsecrecy}, in order to ensure strong security, $\mathcal{D}$ is assigned with random bits ($\mathcal{D} \in \mathcal{R}$), which results in the fact that this scheme failed to accomplish the theoretical reliability. More explicitly, for any $i$-th level channel $W(\Lambda_i/\Lambda_{i+1}, \sigma_b^2)$ at Bob's end, the probability of error is upper bounded by the sum of the Bhattacharyya parameters $Z(W_N^{(j)}(\Lambda_i/\Lambda_{i+1}, \sigma_b^2))$ of those bit-channels that are not frozen to zero. For each bit-channel index $j$ and $\beta <0.5$, we have
\begin{eqnarray*}
j \in \mathcal{A} \cup \mathcal{R}= \mathcal{G}(W(\Lambda_i/\Lambda_{i+1}, \sigma_b^2), \beta) \cup \mathcal{D}.
\end{eqnarray*}
By the definition \eqref{GoodBadChannel}, we can see that the sum of $Z(W_N^{(j)}(\Lambda_i/\Lambda_{i+1}, \sigma_b^2))$ over the set $\mathcal{G}(W(\Lambda_i/\Lambda_{i+1}, \sigma_b^2)$ is bounded by $2^{-N^{\beta}}$, and therefore, the error probability of the $i$-th level channel under the successive cancellation (SC) decoding, denoted by $P_e^{SC}(\Lambda_i/\Lambda_{i+1}, \sigma_b^2)$, can be upper bounded by
\begin{eqnarray*}
P_e^{SC}(\Lambda_i/\Lambda_{i+1}, \sigma_b^2) \leq 2^{-N^{\beta}} + \sum_{j \in \mathcal{D}} Z(W_N^{(j)}(\Lambda_i/\Lambda_{i+1}, \sigma_b^2)).
\end{eqnarray*}
Since multistage decoding is utilized, by the union bound, the final decoding error probability for Bob is bounded as
\begin{eqnarray*}
\text{Pr}\{\hat{M} \neq M\} \leq \sum_{i=1}^{r-1} P_e^{SC}(\Lambda_i/\Lambda_{i+1}, \sigma_b^2) .
\end{eqnarray*}
Unfortunately, a proof that this scheme satisfies the reliability condition cannot be arrived here because the bound of the sum $\sum_{j \in \mathcal{D}} Z(W_N^{(j)}(\Lambda_i/\Lambda_{i+1}, \sigma_b^2))$ is not known. Note that significantly low probabilities of error can still be achieved in practice since the size of $\mathcal{D}$ is very small.

It is also worthy mentioning that this reliability problem was recently solved in \cite{NewPolarSchemeWiretap}, where a new scheme dividing the information message of each $\Lambda_i/\Lambda_{i+1}$ channel into several blocks is proposed. For a specific block, $\mathcal{D}$ is still assigned with random bits and transmitted in advance in the set $\mathcal{A}$ of the previous block.  This scheme involves negligible rate loss and finally realizes reliability and strong security simultaneously. In this case, if the reliability of each partition channel can be achieved, i.e., for any $i$-th level partition $\Lambda_i/\Lambda_{i+1}$, $P_e^{SC}(\Lambda_i/\Lambda_{i+1}, \sigma_b^2)$ vanishes as $N$ goes to infinity. Then the total decoding error probability for Bob can be made arbitrarily small. Actually, based on the new scheme of assigning the problematic bits in $\mathcal{D}$ \cite{NewPolarSchemeWiretap}, the error probability on level $i$ can be upper bounded by
\begin{eqnarray}
P_e^{SC}(\Lambda_i/\Lambda_{i+1}, \sigma_b^2) \leq \epsilon_{N'}^i +k_i \cdot o(2^{-N'^{\beta}}),
\end{eqnarray}
where $k_i$ is the number of information blocks on the $i$-th level, $N'$ is the length of each block which satisfies $N'\times k_i=N$ and $\epsilon_N^i$ is caused by the first separate block on the $i$-th level consisting of the initial bits in $\mathcal{D}_i$. Since $|\mathcal{D}_i|$ is extremely small comparing to the block length $N$, the decoding failure probability for the first block can be made arbitrarily small when $N$ is sufficiently large. Therefore, $\Lambda_b$ is an AWGN-good lattice.

Note that the rate loss incurred by repeatedly transmitted bits in $\mathcal{D}_i$ is negligible because of its small size and the fact that only one block is wasted on each level. Explicitly, the actually achieved secrecy rate in the $i$-th level is given by $\frac{k_i}{k_i+1} [C(\Lambda_i/\Lambda_{i+1}, \sigma_b^2)-C(\Lambda_i/\Lambda_{i+1}, \sigma_e^2)]$. Clearly, this rate can be made close to the maximum secrecy rate by choosing sufficiently large $k_i$ as well.

\section{Concluding Remarks}
In this work, we showed that the new polar lattice can achieve both strong secrecy and reliability over the mod-$\Lambda_s$ Gaussian wiretap channel. The uniform assumption was used to construct AWGN-good lattice $\Lambda_b$ and the secrecy-good lattice $\Lambda_e$. Since the confidential message is mapped to the coset leaders of $\Lambda_b/\Lambda_e$, the channel between the confidential message and the Eavesdropper can be seen as a $\Lambda_b/\Lambda_e$ channel. Since the $\Lambda_b/\Lambda_e$ channel is symmetric, the maximum mutual information is achieved by the uniform input. Consequently, the mutual information corresponding to other input distributions can also be upper bounded by $rN2^{-N^\beta}$ in \eqref{eqn:upperbound}. Therefore, strong secrecy under other distributions can also be proved, which means our scheme actually achieves semantical security in cryptographic terms \cite{cong2,semantic1}. The rigorous proof will be given in the journal paper. A similar statement for the binary symmetric channel can be found in \cite[Theorem 4.12]{semantic1}.

\section*{Acknowledgments}
The authors would like to thank Prof. Jean-Claude Belfiore for helpful discussions. This work was supported in part by FP7 project PHYLAWS (EU FP7-ICT 317562) and in part by the China Scholarship Council.

\bibliographystyle{IEEEtran}
\bibliography{yanfei}

\end{document}